\documentclass[journal,10pt]{IEEEtran} 
\usepackage[figurename=Fig.]{caption}
\usepackage{tikz}
\usetikzlibrary{calc, positioning, shapes.geometric, decorations.pathreplacing}
\usepackage{array}
\usepackage{verbatim}
\usetikzlibrary{calc, shapes, arrows, positioning}
\usepackage{authblk}
\usepackage{blindtext}
\usepackage{ifpdf}
\usepackage{graphicx}
\usepackage{tabulary}
\usepackage{amsmath,amsthm,amssymb}
\usepackage{amsmath}
\usepackage{kantlipsum}
\allowdisplaybreaks
\usepackage{cleveref}
\usepackage{lipsum}%
\usepackage{optidef}
\usepackage[english]{babel} 
\theoremstyle{plain}
\newtheorem{theorem}{Theorem}

\crefname{theorem}{Theorem}{theorem}
\crefname{lemma}{Lemma}{Lemmas}

\usepackage{cite}
\usepackage{dsfont}
\usepackage[justification=centering]{caption}
\usepackage{color}
\usepackage{algorithm}
\usepackage{algorithmicx, algpseudocode}
\usepackage{etoolbox}
\usepackage{subcaption}
\usepackage{balance}
\usepackage{empheq,etoolbox}
\usepackage[utf8]{inputenc}
\usepackage{multirow}
\usepackage{float}

\usepackage{amssymb}
\usepackage{pifont}
\usepackage[hmargin=1.27cm,top=1.3cm,bottom=1.3cm]{geometry}

\usepackage{physics}

\usepackage{babel}
\floatstyle{plaintop}
\restylefloat{table}
\patchcmd{\subequations}
\usepackage{lipsum} 
\allowdisplaybreaks

\tikzset{brace/.style={decorate, decoration={brace}},
 brace mirrored/.style={decorate, decoration={brace,mirror}},
}

\newcounter{brace}
\newcounter{arrow}
\begin{document}
 \captionsetup[figure]{name={Fig.},labelsep=period}


\title{{\color{black}Hybrid Wireless-Fed Pinching-Antenna Systems with Residual Self-Interference-Aware Optimization}}

\bstctlcite{IEEEexample:BSTcontrol}
\author{Hao Feng, Ming Zeng, Ebrahim Bedeer, Xingwang Li, Octavia A. Dobre, \textit{Fellow, IEEE} and Zhiguo Ding, \textit{Fellow, IEEE}
    \thanks{This work was supported in part by NSERC under its Discovery Grant. (Corresponding author: Ming Zeng.)}
    \thanks{H. Feng is with Hunan Institute of Engineering, Xiangtan, China, and also with Donghua University, Shanghai, China (e-mail: 1219001@mail.dhu.edu.cn).}
    
    \thanks{M. Zeng is with Laval University, Quebec City, Canada (email: ming.zeng@gel.ulaval.ca).}

    \thanks{E. Bedeer is with University of Saskatchewan, Saskatoon, SK, Canada (email: e.bedeer@usask.ca).}

    \thanks{X. Li is with Henan Polytechnic University, Jiaozuo, China (email: lixingwang@hpu.edu.cn).}





    






\thanks{O. A. Dobre is with Memorial University, St. John’s, Canada (e-mail: odobre@mun.ca).}



\thanks{Z. Ding is with Nanyang Technological University, Singapore. (e-mail:zhiguo.ding@ieee.org).}


    }

\maketitle

\begin{abstract}
Pinching-antenna systems (PASS) have recently emerged as a promising solution for enhancing coverage in high-frequency wireless communications by guiding signals through dielectric waveguides and radiating them via position-adjustable antennas. However, their practical deployment is limited by waveguide attenuation and the need for physical line installation, which restrict flexibility and coverage extension. {\color{black}To address these challenges, this paper proposes a hybrid wireless-fed PASS architecture, where a base station equipped with an antenna array provides adaptive directional transmission to a full-duplex amplify-and-forward relay employing a horn antenna to feed the waveguide. This hybrid design balances beamforming flexibility and low-complexity directional waveguide interfacing.
Residual self-interference (SI) at the full-duplex relay is explicitly modeled to capture practical system impairments.} Under this framework, a total power minimization problem is formulated subject to a quality-of-service constraint at the user equipment, involving the joint optimization of the pinching-antenna position, the relay amplification gain, and the base station transmit power. By exploiting the structure of the end-to-end signal-to-noise ratio, the optimal pinching-antenna position is first obtained in closed form by balancing waveguide attenuation and free-space path loss. Closed-form expressions for the optimal relay gain and transmit power are then derived. 
{\color{black}Numerical results under the adopted system-level model demonstrate that the proposed scheme reduces total power consumption compared with conventional benchmark systems,} while providing a more realistic and robust design by accounting for residual SI.
\end{abstract}

\begin{IEEEkeywords}
Wireless-fed pinching antenna systems (Wi-PASS), power minimization, horn antenna, full-duplex, relay.
\end{IEEEkeywords}
\IEEEpeerreviewmaketitle

\section{Introduction}
High-frequency wireless communication, including millimeter-wave (mmWave) and terahertz (THz) systems, has emerged as a key enabler for next-generation networks due to the availability of large bandwidths \cite{Sun_TVT18, Hao_Network22}. However, severe path loss and susceptibility to blockage significantly limit coverage at these frequencies. To address this challenge, pinching-antenna systems have recently been proposed as an effective architecture for enhancing signal propagation in high-frequency regimes \cite{Atsushi_22}. In such systems, the transmitted signal is guided through a dielectric waveguide and radiated by a pinching antenna whose position along the waveguide can be dynamically adjusted \cite{zeng2025_WCM, liu2025pinching, yang2025, ding2024}. This architecture enables flexible control of radiation characteristics, improves spatial adaptability, and offers a low-loss alternative to purely over-the-air transmission, making it particularly attractive for high-frequency applications \cite{ Zeng_COMML25, Xiao_COMML25, Zhao_TCOM25, Ouyang_COMML25, zeng2025EE, fu2025}.

Despite their advantages, conventional PASS architectures face practical limitations. In particular, signals suffer from cumulative attenuation along the dielectric waveguide, and the requirement of physical line installation restricts deployment flexibility. These issues become more pronounced in large-scale or difficult-to-wire environments.
To overcome these limitations, wireless-fed PASS (Wi-PASS) architectures have been proposed, where a relay wirelessly delivers signals to the waveguide input \cite{wijewardhana2025}. {\color{black}However, the design of the wireless-feeding interface remains an important issue. On the one hand, antenna arrays can provide adaptive beamforming gain and are well suited for flexible transmission from the base station (BS) to the relay. On the other hand, employing antenna arrays at the relay may increase hardware complexity and provides limited passive spatial isolation in full-duplex operation. Conversely, horn antennas offer high directionality, narrow beamwidth, and favorable front-to-back radiation characteristics, which make them attractive as low-complexity relay-side receiving interfaces, but they lack the beam adaptability required at the BS.



Motivated by this tradeoff, we propose a \emph{hybrid BS-array/relay-horn architecture in this work}, where the BS employs an antenna array, while the relay adopts a highly directional horn antenna. This hybrid design is fundamentally different from existing approaches and is motivated by two key advantages. 
First, equipping the BS with an antenna array preserves beamforming adaptability, which is critical for practical wireless systems. In particular, the BS can dynamically adjust its transmission direction to serve users under varying spatial conditions, avoiding the performance degradation caused by fixed-beam transmission. 
Second, adopting a horn antenna at the relay significantly alleviates self-interference (SI) in full-duplex operation. Owing to its narrow beamwidth and high front-to-back ratio, the horn antenna provides strong inherent spatial isolation between transmit and receive paths \cite{Horn_antenna_book}. This structural property can reduce the SI level at the relay and complements practical cancellation techniques.
By combining these two properties, the proposed hybrid architecture achieves a favorable tradeoff between beam adaptability and interference mitigation, which is not attainable with conventional designs.}

Based on this architecture, we study the problem of minimizing the total system power consumption subject to a quality-of-service (QoS) constraint at the user equipment (UE). The optimization variables include the pinching-antenna position, the relay amplification gain, and the BS transmit power. By exploiting the structure of the end-to-end signal-to-noise ratio (SNR), we derive closed-form solutions for all optimization variables.

{\color{black}
The main contributions of this paper are as follows:
\begin{itemize}
\item We propose a hybrid BS-array/relay-horn Wi-PASS architecture that preserves beam adaptability at the BS while reducing residual SI through directional relay-horn reception.
\item We formulate a QoS-constrained total power minimization problem by jointly optimizing the pinching-antenna position, relay gain, and BS transmit power.
\item We derive closed-form expressions for the optimal pinching-antenna position, relay gain, and transmit power.
\item Numerical results demonstrate that the proposed hybrid architecture significantly improves power efficiency compared to conventional schemes.
\end{itemize}
}

{\color{black}The present work focuses on a fundamental single-user link-level design; multi-user Wi-PASS operation would require additional scheduling, resource allocation, and possibly multiple pinching antennas or waveguides.}

\begin{figure}[t]
\centering
{\includegraphics[width=0.42\textwidth]{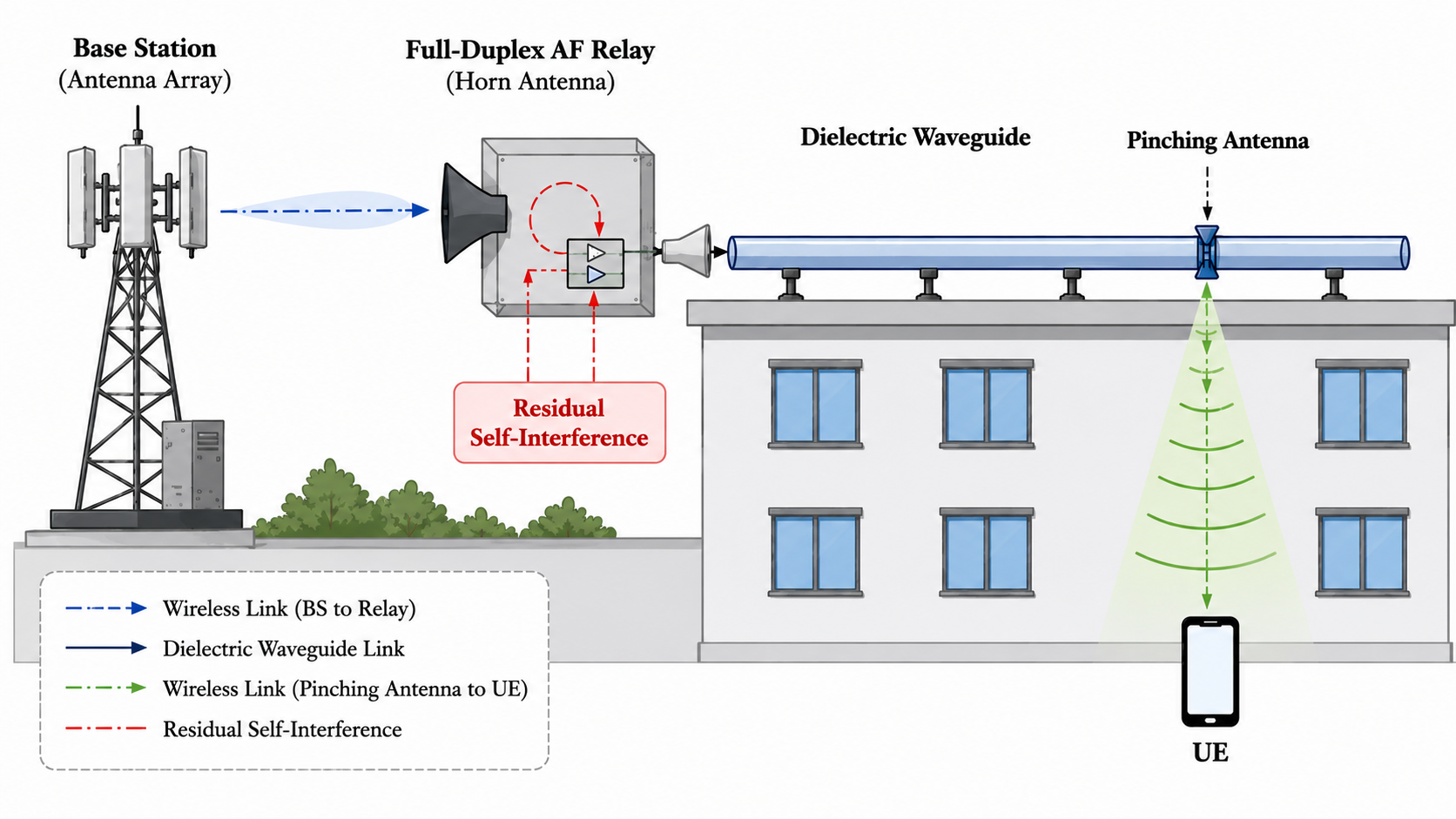}}
\caption{{\color{black}Proposed hybrid Wi-PASS system.}}
\label{fig:system model}
\end{figure}

\section{{\color{black}Proposed Hybrid Wi-PASS Architecture}}
{\color{black}As shown in Fig. \ref{fig:system model}, we propose a hybrid Wi-PASS for coverage extension in high-frequency communications. In such a system, the BS employs an antenna array to enable adaptive beamforming toward the relay,} while the relay uses a horn antenna for reception and feeds the signal into the dielectric waveguide, where it is radiated toward the UE via the pinching antenna. {\color{black}In practice, the relay horn should be placed close to the dielectric-waveguide input and oriented such that its main beam is aligned with the waveguide-feeding interface.}


\subsection{Signal Model}

Let $x$ denote the transmitted symbol from the BS with $\mathbb{E}[|x|^2]=1$. {\color{black}The BS applies a beamforming vector $\mathbf{w}\in \mathbb{C}^{N_t\times 1}$ satisfying $\|\mathbf{w}\|^2=1$, with $ N_t$ denoting the number of antennas at the BS. In full-duplex relaying, the relay receives the BS signal while simultaneously forwarding the amplified signal into the waveguide. After passive isolation and practical SI cancellation, the residual SI is modeled as an equivalent additive interference term, and thus, the received signal at the relay is given by}
\begin{equation}
{\color{black}y_{\rm{R}} = \sqrt{P_1}\mathbf{h}_{\rm{BR}}^H \mathbf{w}x + n_R+ z_{\mathrm{SI}},}
\end{equation}
where $P_1$ denotes the BS transmit power, $\mathbf{h}_{\rm{BR}}$ denotes the BS-to-relay channel vector, $n_R \sim \mathcal{CN}(0,\sigma_R^2)$ is the additive white Gaussian noise (AWGN) at the relay, {\color{black}and $z_{\mathrm{SI}}\sim\mathcal{CN}(0,\sigma_{\mathrm{SI}}^2)$ denotes the residual SI after cancellation.}

{\color{black}Define the effective first-hop channel as
\begin{equation}
g_1 \triangleq \mathbf{h}_{\rm{BR}}^H \mathbf{w}.
\end{equation}

For the considered line-of-sight (LoS)-dominant wireless-feeding link, the channel is modeled as
\begin{equation} \label{BS_relay}
\mathbf{h}_{\rm{BR}} = \sqrt{G_{\rm{r}} h_1}\,\mathbf{a}(\theta),
\end{equation}
where $G_{\rm{r}}$ denotes the relay horn antenna gain, $\mathbf{a}(\theta)$ is the BS array steering vector toward the relay}, and
\begin{equation}
h_1 = \frac{c^2}{16\pi^2 f^2 d_1^2}
\end{equation}
represents the free-space path loss, with $f$ denoting the carrier frequency, $c$ the speed of light, and $d_1$ the distance between the BS and the relay. {\color{black}By adopting matched beamforming toward the relay, the effective channel gain becomes
\begin{equation}
|g_1|^2 = N_t G_{\rm{r}} h_1.
\end{equation}
}
The relay applies an amplification factor $\beta$ and forwards the received signal into the dielectric waveguide. The transmitted signal at the relay is given by $ x_{\rm{R}} = \beta y_{\rm{R}}$.
Accordingly, the relay transmit power is\footnote{{\color{black}Following common system-level full-duplex modeling, the residual SI is treated as an impairment that degrades the effective relay observation and hence the end-to-end SNR, while the relay transmit-power budget is determined by the desired received signal and thermal noise components.}}
\begin{equation}
P_2 \triangleq \mathbb{E}[|x_{\rm{R}}|^2] = \beta^2\left(P_1|g_1|^2 + \sigma_R^2\right).
\end{equation}

The total relay power consumption is modeled as
\begin{equation}
P_{\mathrm{relay}} = \frac{P_2}{\eta_{\mathrm{PA}}} + P_{\mathrm{amp,circ}},
\end{equation}
where $\eta_{\mathrm{PA}}\in(0,1]$ denotes the power amplifier efficiency and $P_{\mathrm{amp,circ}}$ represents the circuit power consumption at the relay.

Let $g_2$ denote the effective channel between the pinching antenna and UE. Then, the received signal at the UE is
\begin{equation}
y_{\mathrm{UE}} = g_2 x_{\rm{R}} + n_{\mathrm{UE}},
\end{equation}
where $n_{\mathrm{UE}} \sim \mathcal{CN}(0,\sigma_{\mathrm{UE}}^2)$ denotes the AWGN at the UE.

{\color{black}
Accordingly, the end-to-end SNR at the UE is given by 
\begin{equation} \label{SNR}
\gamma =
\frac{P_1\beta^2 |g_1|^2 |g_2|^2}
{\sigma_{\mathrm{UE}}^2 + \beta^2 |g_2|^2 (\sigma_R^2+\sigma_{\mathrm{SI}}^2)}.
\end{equation}


It should be noted that the above model adopts a system-level communication-theoretic abstraction for analytical tractability. Detailed electromagnetic effects, such as horn antenna radiation patterns, polarization mismatch, alignment errors, and the exact coupling mechanisms contributing to SI, are not explicitly modeled. {\color{black}These factors may affect the practical link budget and deployment robustness,} and their aggregate impact is captured through effective channel gains and the residual SI term. A more detailed geometry- or electromagnetic-aware modeling is left for future work.
}

The pinching antenna is located at $\Phi_{\mathrm{Pin}} = (x_{\mathrm{Pin}}, 0, d)$, where $x_{\mathrm{Pin}} \in [0, L]$ and $L$ denotes the length of the dielectric waveguide. The UE is located at $\Phi_{\mathrm{UE}} = (x_{\mathrm{UE}}, y_{\mathrm{UE}}, 0)$. Following existing PASS studies, the relay-to-UE channel gain is modeled as \cite{tyrovolas2025}
\begin{equation} \label{relay_UE}
|g_2|^2 =
\frac{c^2 e^{-\alpha_D x_{\mathrm{Pin}}}}
{16\pi^2 f^2 \|\Phi_{\mathrm{UE}} - \Phi_{\mathrm{Pin}}\|^2},
\end{equation}
where $\alpha_D$ denotes the attenuation coefficient of the waveguide.

{\color{black}

The free-space terms in \eqref{BS_relay} and \eqref{relay_UE} are adopted to capture the dominant LoS components of the considered relay-based Wi-PASS. Specifically, the BS-to-relay link is a fixed point-to-point
wireless-feeding link, where the relay is deployed at a known location and the BS array can steer its beam toward the relay. For the relay-to-UE link, the relay and the waveguide shorten the effective over-the-air propagation distance, while the position-adjustable pinching antenna further helps establish a short-range LoS-like link by radiating from a favorable location along the waveguide. Therefore,
the adopted model provides a tractable first-order approximation for the proposed architecture. Practical impairments such as blockage, scattering, misalignment, {\color{black}radiation efficiency, PA-waveguide coupling efficiency,} and molecular absorption are not explicitly modeled and are left for future geometry-aware or measurement-based studies.

}

\subsection{Problem Formulation}
Our objective is to minimize the total power consumption of the system while guaranteeing a target QoS at the UE. Specifically, we jointly optimize the pinching antenna position, the BS transmit power, and the relay amplification gain. The optimization problem is formulated as\footnote{{\color{black}This work focuses on power minimization and does not explicitly model the impact of the reconfiguration latency of PASS on system performance. This assumption is suitable for quasi-static or slowly varying deployments, while latency-aware PASS optimization for highly dynamic scenarios is left for future work.}}
\begin{subequations} \label{P1}
\begin{align}
\min_{x_{\mathrm{Pin}},\,P_1,\,\beta} \quad
& P_1+ P_{\mathrm{relay}} \\
\mathrm{s.t.}\quad
& x_{\mathrm{Pin}} \in [0,L], \\
& \gamma \geq \gamma_0, \\
& P_1 \geq 0,\;\; \beta \geq 0,
\end{align}
\end{subequations}
where $\gamma_0$ denotes the minimum SNR requirement at the UE.


\section{Proposed Solution}

Substituting $P_{\mathrm{relay}}$ into the objective function yields
\begin{equation}
P_1 + P_{\mathrm{relay}}
= P_1 + \frac{\beta^2\left(P_1 |g_1|^2 + \sigma_R^2\right)}{\eta_{\mathrm{PA}}}
+ P_{\mathrm{amp,circ}}.
\end{equation}
Since $P_{\mathrm{amp,circ}}$ is constant, we consider the equivalent objective
\begin{equation}
\eta_{\mathrm{PA}} P_1 + \beta^2 \left(P_1 |g_1|^2 + \sigma_R^2\right).
\end{equation}

For fixed $(P_1,\beta)$, the SNR in \eqref{SNR} is monotonically increasing in $|g_2|^2$. Therefore, the optimal pinching antenna position is obtained by maximizing $|g_2|^2$, independently of $(P_1,\beta)$. 

\begin{theorem}
    The optimal position $x_{\mathrm{Pin}}^\star$ is given in closed form as
\begin{equation}
x_{\text{Pin}}^{\star} =
\begin{cases}
0,  
& \text{if}~ \Delta<0~ \text{or}~ x_{2} \leq 0  ~\text{or}~ x_{1} \geq 0 \\ 
& \text{and}~ f(0) \geq f(\min(L, x_{2})) \\[6pt]
\min(L, x_{2}), &\text{otherwise}
\end{cases}
\end{equation}
where $\Delta=4-4\alpha_D^2 (y_{\rm{UE}}^2+d^2 )$, $x_1=x_{\rm{UE}}- \frac{1+ \sqrt{1-\alpha_D^2(y_{\rm{UE}}^2+d^2 )  }}{\alpha_D}$, $x_2=x_{\rm{UE}}- \frac{1- \sqrt{1-\alpha_D^2(y_{\rm{UE}}^2+d^2 )  }}{\alpha_D}$, and $f(x)=\frac{e^{-\alpha_Dx}}{ (x_{\rm{UE}}-x)^2+ y_{\rm{UE}}^2 +d^2 }$. 
\end{theorem}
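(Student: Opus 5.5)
Since the SNR in \eqref{SNR} is increasing in $|g_2|^2$, and by \eqref{relay_UE} $|g_2|^2$ is a positive constant times $f(x_{\mathrm{Pin}})$, the plan is to maximize $f(x)=e^{-\alpha_D x}/q(x)$ over $x\in[0,L]$, where $q(x)=(x_{\rm UE}-x)^2+D$ and $D\triangleq y_{\rm UE}^2+d^2>0$. We work with $\ln f(x)=-\alpha_D x-\ln q(x)$, which is smooth on $[0,L]$ and has the same maximizers. Its derivative is
\begin{equation*}
\frac{d}{dx}\ln f(x) = -\alpha_D + \frac{2(x_{\rm UE}-x)}{q(x)} = \frac{-\alpha_D q(x)+2(x_{\rm UE}-x)}{q(x)}.
\end{equation*}
Substituting $u=x_{\rm UE}-x$, the sign of the derivative is the sign of the quadratic $p(u)=-\alpha_D u^2+2u-\alpha_D D$.

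Next we locate the roots of $p$. Its discriminant is $4-4\alpha_D^2 D=\Delta$. If $\Delta<0$, then $p<0$ everywhere, so $f$ is strictly decreasing and $x^\star=0$. If $\Delta\ge 0$, the roots are $u_\pm=\bigl(1\pm\sqrt{1-\alpha_D^2 D}\bigr)/\alpha_D$, which correspond to $x_1=x_{\rm UE}-u_+\le x_2=x_{\rm UE}-u_-$. Since $p>0$ exactly for $u\in(u_-,u_+)$, i.e., for $x\in(x_1,x_2)$, the function $f$ is decreasing on $(-\infty,x_1]$, increasing on $[x_1,x_2]$, and decreasing on $[x_2,\infty)$. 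Thus $x_1$ is a local minimum and $x_2$ a local maximum.

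We then intersect this monotonicity pattern with $[0,L]$. If $x_2\le 0$, then $f$ is decreasing on $[0,L]$ and $x^\star=0$. Otherwise $x_2>0$. If $x_1<0<x_2$, then $f$ increases on $[0,\min(L,x_2)]$ and decreases afterwards, so $x^\star=\min(L,x_2)$. If $x_1\ge 0$, then $f$ decreases on $[0,x_1]$, increases up to $\min(L,x_2)$, and decreases afterwards. The only candidates are therefore $0$ and $\min(L,x_2)$, and we compare $f(0)$ with $f(\min(L,x_2))$. (If $x_1\ge L$, then $f$ is decreasing on the whole interval, and since $\min(L,x_2)=L$ the comparison still selects $0$.) Collecting these cases gives the stated piecewise formula. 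The case $x_1<0<x_2$ falls under ``otherwise'' because none of the first-branch conditions holds. We read the condition in the statement as ``$\Delta<0$, or $x_2\le0$, or ($x_1\ge0$ and $f(0)\ge f(\min(L,x_2))$)''.

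The main obstacle is the bookkeeping in the case split. The delicate subcase is $x_1\ge 0$, where both endpoint candidates must be compared and the edge cases $x_1\ge L$ and $\Delta=0$ (where $x_1=x_2$) must be checked to be covered consistently. When $\Delta=0$, $p\le 0$ everywhere, so $f$ is monotone decreasing. Both branches of the formula then agree: if $x_2\le 0$ the formula gives $0$ directly; if $x_2>0$ then $x_1=x_2>0$, and since $f$ is decreasing, $f(0)\ge f(\min(L,x_2))$, so the formula again selects $0$.
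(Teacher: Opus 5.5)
Your proof is correct and follows the same route as the paper. Both arguments maximize $\ln f$, reduce the sign of the derivative to a quadratic in $u=x_{\mathrm{UE}}-x$ with discriminant $\Delta$, and split cases on the positions of $x_1\le x_2$ relative to $0$, comparing $f(0)$ with $f(\min(L,x_2))$ when $x_1\ge 0$. You also handle explicitly the edge cases $\Delta=0$ and $x_1\ge L$ and parenthesize the first-branch condition, which the paper leaves implicit but which agree with its argument.
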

\begin{IEEEproof}
To determine the optimal pinching antenna position, we maximize $f(x)$ by equivalently maximizing its natural logarithm
\begin{equation}
g(x) = \ln f(x)
= -\alpha_D x - \ln\!\left((x_{\mathrm{UE}} - x)^2 + y_{\mathrm{UE}}^2 + d^2\right),
\end{equation}
which preserves the stationary points of $f(x)$.

Taking the derivative of $g(x)$ over $x$ and setting it to zero yields
\begin{equation}
-\alpha_D + \frac{2(x_{\mathrm{UE}} - x)}{(x_{\mathrm{UE}} - x)^2 + y_{\mathrm{UE}}^2 + d^2} = 0.
\end{equation}

Let $u = x_{\mathrm{UE}} - x$ and define $C = y_{\mathrm{UE}}^2 + d^2$. The above condition can be rewritten as
\begin{equation}
\alpha_D (u^2 + C) = 2u,
\end{equation}
which leads to the quadratic equation
\begin{equation}
\alpha_D u^2 - 2u + \alpha_D C = 0.
\end{equation}

When the discriminant $\Delta = 4 - 4\alpha_D^2 C$ is negative, the quadratic admits no real solution, and $f(x)$ is strictly decreasing over $x \in [0,L]$. Consequently, the maximum is attained at $x=0$.

Otherwise, the quadratic equation yields two real solutions
\begin{equation}
u = \frac{1 \pm \sqrt{1 - \alpha_D^2 (y_{\mathrm{UE}}^2 + d^2)}}{\alpha_D},
\end{equation}
which correspond to the stationary points
\begin{subequations}
\begin{align}
x_1 &= x_{\mathrm{UE}} - \frac{1 + \sqrt{1 - \alpha_D^2 (y_{\mathrm{UE}}^2 + d^2)}}{\alpha_D}, \\
x_2 &= x_{\mathrm{UE}} - \frac{1 - \sqrt{1 - \alpha_D^2 (y_{\mathrm{UE}}^2 + d^2)}}{\alpha_D}.
\end{align}
\end{subequations}

By examining the sign of the derivative, it follows that $x_1$ corresponds to a local minimum, whereas $x_2$ corresponds to a local maximum of $f(x)$.

If $x_2 \leq 0$, $f(x)$ is strictly decreasing on $[0,L]$, and the maximum occurs at $x=0$. If $x_1 \geq 0$, $f(x)$ decreases on $[0,x_1]$, and the optimal solution is determined by comparing $f(0)$ and $f(\min(L,x_2))$. Finally, when $x_1 < 0$ and $x_2 > 0$, $f(x)$ increases on $[0,x_2]$, and the maximum is attained at $x=\min(L,x_2)$. This completes the proof.
\end{IEEEproof}


With the optimal pinching antenna position fixed, the original problem reduces to a joint optimization over the BS transmit power $P_1$ and the relay gain $\beta^2$, which can be formulated as {\color{black}
\begin{align}
\min_{P_1,\beta^2} \quad 
& J(P_1,\beta^2) = \eta_{\mathrm{PA}}P_1 + \beta^2\left(P_1|g_1|^2 + \sigma_R^2\right) \\
\mathrm{s.t.} \quad 
& \frac{P_1\beta^2 |g_1|^2 |g_2|^2}
{\sigma_{\mathrm{UE}}^2 + \beta^2|g_2|^2 \tilde{\sigma}_R^2} \ge \gamma_0,
\end{align}
where $\tilde{\sigma}_R^2 = \sigma_R^2 + \sigma_{\mathrm{SI}}^2$ denotes the effective relay impairment.}

\begin{theorem}
The minimum value of $J(P_1,\beta^2)$ is given by{\color{black}
\begin{equation}
J^\star =
\frac{\eta_{\mathrm{PA}}\gamma_0 \tilde{\sigma}_R^2}{|g_1|^2}
+
\frac{\gamma_0 \sigma_{\mathrm{UE}}^2}{|g_2|^2}
+
\frac{2 \sigma_{\mathrm{UE}}}{|g_1||g_2|}
\sqrt{\eta_{\mathrm{PA}}\gamma_0\left(\gamma_0 \tilde{\sigma}_R^2 + \sigma_R^2\right)}.
\end{equation}
}
\end{theorem}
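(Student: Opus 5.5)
The plan is to eliminate $P_1$ using the QoS constraint, which turns the problem into a one-dimensional problem in $b \triangleq \beta^2$. That problem has the form "constant $+\,A/b + Bb$", which is minimized by the AM--GM inequality.

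\emph{Step 1 (active constraint).} Fix $b>0$. The objective $J(P_1,b)=\eta_{\mathrm{PA}}P_1 + b(P_1|g_1|^2+\sigma_R^2)$ is strictly increasing in $P_1$. The SNR $\frac{P_1 b|g_1|^2|g_2|^2}{\sigma_{\mathrm{UE}}^2+b|g_2|^2\tilde{\sigma}_R^2}$ is strictly increasing and unbounded in $P_1$. Hence every $b>0$ is feasible, and at the optimum the constraint holds with equality. The value $b=0$ gives zero SNR and is infeasible. Solving the equality gives
\begin{equation*}
P_1(b)=\frac{\gamma_0\left(\sigma_{\mathrm{UE}}^2+b|g_2|^2\tilde{\sigma}_R^2\right)}{b|g_1|^2|g_2|^2}
=\frac{\gamma_0\tilde{\sigma}_R^2}{|g_1|^2}+\frac{\gamma_0\sigma_{\mathrm{UE}}^2}{b|g_1|^2|g_2|^2}.
\end{equation*}

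\emph{Step 2 (reduce to one variable).} Substitute $P_1(b)$ into $J$. The term $\eta_{\mathrm{PA}}P_1$ contributes $\frac{\eta_{\mathrm{PA}}\gamma_0\tilde{\sigma}_R^2}{|g_1|^2}+\frac{\eta_{\mathrm{PA}}\gamma_0\sigma_{\mathrm{UE}}^2}{b|g_1|^2|g_2|^2}$. The term $bP_1|g_1|^2$ contributes $b\gamma_0\tilde{\sigma}_R^2+\frac{\gamma_0\sigma_{\mathrm{UE}}^2}{|g_2|^2}$. The term $b\sigma_R^2$ stays as it is. Collecting these gives
\begin{equation*}
J(b)=\frac{\eta_{\mathrm{PA}}\gamma_0\tilde{\sigma}_R^2}{|g_1|^2}+\frac{\gamma_0\sigma_{\mathrm{UE}}^2}{|g_2|^2}+\frac{A}{b}+Bb,
\end{equation*}
where
\begin{equation*}
A=\frac{\eta_{\mathrm{PA}}\gamma_0\sigma_{\mathrm{UE}}^2}{|g_1|^2|g_2|^2},\qquad B=\gamma_0\tilde{\sigma}_R^2+\sigma_R^2 .
\end{equation*}

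\emph{Step 3 (minimize).} The function $A/b+Bb$ on $b>0$ is strictly convex. By AM--GM its minimum is $2\sqrt{AB}$, attained at $b^\star=\sqrt{A/B}$. Computing the minimum explicitly,
\begin{equation*}
2\sqrt{AB}=\frac{2\sigma_{\mathrm{UE}}}{|g_1||g_2|}\sqrt{\eta_{\mathrm{PA}}\gamma_0\left(\gamma_0\tilde{\sigma}_R^2+\sigma_R^2\right)}.
\end{equation*}
Adding the two constant terms reproduces $J^\star$ exactly. This also yields the optimizers $\beta^{\star2}=b^\star$ and $P_1^\star=P_1(b^\star)$ in closed form.

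The only step needing real care is Step 1, namely justifying that the constraint is tight and that no boundary case ($b\to 0$ or $b\to\infty$) does better. Both limits drive $J(b)\to\infty$, so the interior stationary point is the global minimum. The rest is routine algebra.
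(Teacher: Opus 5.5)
Your proof is correct. It follows the same overall strategy as the paper: make the QoS constraint active, reduce to one variable, and minimize a function of the form $\text{const}+\kappa_1 v+\kappa_2/v$. The difference is that you eliminate the other variable.

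The paper solves the active constraint for $\beta^2$ as a function of $P_1$. This gives the rational expression $\beta^2=\gamma_0\sigma_{\mathrm{UE}}^2/\bigl(|g_2|^2(P_1|g_1|^2-\gamma_0\tilde{\sigma}_R^2)\bigr)$, which has a pole. The paper then needs the shift $u=P_1|g_1|^2-\gamma_0\tilde{\sigma}_R^2$ to bring $J$ into the form $\text{const}+\eta_{\mathrm{PA}}u/|g_1|^2+\gamma_0\sigma_{\mathrm{UE}}^2B/(|g_2|^2u)$ (with your $B$), and only then sets the derivative to zero.

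You use the fact that the constraint is affine in $P_1$ for fixed $\beta^2$. Solving for $P_1$ therefore gives the separable form $\text{const}+A/b+Bb$ immediately, with no auxiliary variable. In fact the paper's $u$ equals $\gamma_0\sigma_{\mathrm{UE}}^2/(|g_2|^2 b)$, so the two scalar problems coincide under the reparametrization $u\propto 1/\beta^2$.

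Your route buys the following:
\begin{itemize}
\item The feasible set after elimination is transparently all $b>0$. The paper leaves its domain requirement $u>0$, i.e., $P_1>\gamma_0\tilde{\sigma}_R^2/|g_1|^2$, implicit.
\item Your tightness argument, via strict monotonicity and unboundedness of the SNR in $P_1$ for fixed $b$, is more careful than the paper's joint-monotonicity remark.
\item AM--GM certifies global optimality directly, without a separate argument that the stationary point is a minimum.
\end{itemize}
The paper's route reads off $P_1^\star$ directly from $u^\star$, while you obtain it by back-substitution $P_1^\star=P_1(b^\star)$. The resulting $\beta^{2\star}$ and $J^\star$ agree with the paper's.
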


\begin{proof}
Since both the objective function $J(P_1,\beta^2)$ and the end-to-end SNR monotonically increase with respect to $P_1$ and $\beta^2$, the minimum is achieved when the SNR constraint is satisfied with equality, i.e., {\color{black}
\begin{equation} \label{gamma_0}
\gamma_0 =
\frac{P_1\beta^2 |g_1|^2 |g_2|^2}
{\sigma_{\mathrm{UE}}^2 + \beta^2 |g_2|^2 \tilde{\sigma}_R^2}.
\end{equation}

Rearranging \eqref{gamma_0} yields
\begin{equation}
\beta^2 =
\frac{\gamma_0 \sigma_{\mathrm{UE}}^2}
{|g_2|^2\left(P_1|g_1|^2 - \gamma_0 \tilde{\sigma}_R^2\right)}.
\end{equation}


Substituting the above expression for $\beta^2$ into the objective function yields
\begin{equation}
J(P_1) = \eta_{\mathrm{PA}}P_1 +
\frac{\gamma_0 \sigma_{\mathrm{UE}}^2 \left(P_1|g_1|^2 + \sigma_R^2\right)}
{|g_2|^2\left(P_1|g_1|^2 - \gamma_0 \tilde{\sigma}_R^2\right)}.
\end{equation}

Define the auxiliary variable
\begin{equation}
u \triangleq P_1|g_1|^2 - \gamma_0 \tilde{\sigma}_R^2,
\end{equation}
which implies $P_1|g_1|^2 = u + \gamma_0 \tilde{\sigma}_R^2$. Then, the objective becomes
\begin{equation} \label{Ju_31}
J(u) =
\eta_{\mathrm{PA}}\frac{u + \gamma_0 \tilde{\sigma}_R^2}{|g_1|^2}
+
\frac{\gamma_0 \sigma_{\mathrm{UE}}^2\left(u + \gamma_0 \tilde{\sigma}_R^2 + \sigma_R^2\right)}
{|g_2|^2 u}.
\end{equation}

Ignoring constant terms independent of $u$ in \eqref{Ju_31}, we obtain
\begin{equation}
J(u) =
\eta_{\mathrm{PA}}\frac{u}{|g_1|^2}
+
\frac{\gamma_0 \sigma_{\mathrm{UE}}^2\left(\gamma_0 \tilde{\sigma}_R^2 + \sigma_R^2\right)}
{|g_2|^2 u}
+ \text{const}.
\end{equation}

Taking the derivative over $u$ and setting it to zero yields
\begin{equation}
\eta_{\mathrm{PA}}\frac{1}{|g_1|^2}
-
\frac{\gamma_0 \sigma_{\mathrm{UE}}^2\left(\gamma_0 \tilde{\sigma}_R^2 + \sigma_R^2\right)}
{|g_2|^2 u^2}
= 0.
\end{equation}

Solving gives
\begin{equation}
u^\star =
\frac{|g_1| \sigma_{\mathrm{UE}}}{|g_2|\sqrt{\eta_{\mathrm{PA}}}}
\sqrt{\gamma_0\left(\gamma_0 \tilde{\sigma}_R^2 + \sigma_R^2\right)}.
\end{equation}

Substituting back yields the optimal BS transmit power
\begin{equation}
P_1^\star =
\frac{\gamma_0 \tilde{\sigma}_R^2}{|g_1|^2}
+
\frac{\sigma_{\mathrm{UE}}}{|g_1||g_2|}
\sqrt{\frac{\gamma_0\left(\gamma_0 \tilde{\sigma}_R^2 + \sigma_R^2\right)}{\eta_{\mathrm{PA}}}}.
\end{equation}

The optimal relay gain is
\begin{equation}
\beta^{2\star} =
\frac{\sigma_{\mathrm{UE}}}{|g_1||g_2|}
\sqrt{
\frac{\eta_{\mathrm{PA}}\gamma_0}
{\gamma_0 \tilde{\sigma}_R^2 + \sigma_R^2}
}.
\end{equation}

Substituting $(P_1^\star,\beta^{2\star})$ into the objective function yields $J^\star$. This completes the proof.}
\end{proof}

{\color{black}
Finally, the minimum total power consumption is given by
\begin{equation}
P_{\mathrm{total}}^\star =
\frac{\gamma_0 \tilde{\sigma}_R^2}{|g_1|^2}
+
\frac{\gamma_0 \sigma_{\mathrm{UE}}^2}{\eta_{\mathrm{PA}}|g_2|^2}
+
\frac{2 \sigma_{\mathrm{UE}}}{|g_1||g_2|}
\sqrt{
\frac{\gamma_0\left(\gamma_0 \tilde{\sigma}_R^2 + \sigma_R^2\right)}{\eta_{\mathrm{PA}}}
}
+
P_{\mathrm{amp,circ}}.
\end{equation}

It is observed that the presence of residual SI increases the effective relay impairment $\tilde{\sigma}_R^2$, thereby requiring higher transmit power to satisfy the QoS constraint.}

\section{Numerical Results}

In this section, numerical results are provided to evaluate the performance of the proposed hybrid Wi-PASS. Unless otherwise specified, the carrier frequency is $f=28$ GHz, the system bandwidth is 400 MHz, the noise figure is 10 dB, the waveguide attenuation coefficient is $\alpha_D=0.01$, the horn antenna gain is 20 dBi, the power amplifier efficiency is $\eta_{\mathrm{PA}}=0.9$, and the relay circuit power is $P_{\mathrm{amp,circ}}=0.2$ W. The UE is uniformly distributed within a $30$ m $\times$ $10$ m coverage area, while the waveguide length and height are set to $L=30$ m and $d=3$ m, respectively. The BS--relay distance is $d_1=50$ m and the target SNR is $\gamma_0=20$ dB. {\color{black}The BS is equipped with $N_t=16$ antennas and adopts analog beamforming with one RF chain. The RF chain power consumption is 0.1 W, and each phase shifter consumes 0.01 W. For the proposed scheme, the residual SI power at the horn-based relay is set to $\sigma_{\mathrm{SI}}^2=-85$ dBm.} {\color{black}These values are representative system-level parameter settings used for relative performance comparison; their exact values may vary with hardware implementation and waveguide design.}

To demonstrate the effectiveness of the proposed scheme, we consider the following benchmark schemes.

\emph{{\color{black}Direct transmission}:} The BS directly serves the UE using analog beamforming with the same number of antennas, without relay assistance, dielectric waveguide, or pinching antenna. The BS–UE channel follows a non-line-of-sight (NLoS) model with path loss exponent 3.5 and 8 dB log-normal shadowing \cite{samimi2014characterization}.

\emph{Array with PASS:} This benchmark has the same Wi-PASS structure as the proposed scheme, except that the relay uses a 16-element receive array with single-stream dominant singular-vector beamforming. Its residual SI power is set to be 10 dB higher than that of the horn-based relay {\color{black}as a representative setting} reflecting weaker passive isolation.

\emph{Array without PASS:} This benchmark is similar to the array-based relay scheme, but the relay-to-UE link is modeled as free-space transmission from a fixed relay antenna position, without the position-adjustable pinching antenna.

\emph{Hybrid without PASS:} This benchmark retains the same BS-array/relay-horn wireless-feeding architecture as the proposed scheme, but removes the position-adjustable pinching antenna and uses a fixed relay-to-UE free-space link.


\begin{figure}[t]
\centering
{\includegraphics[width=0.43\textwidth]{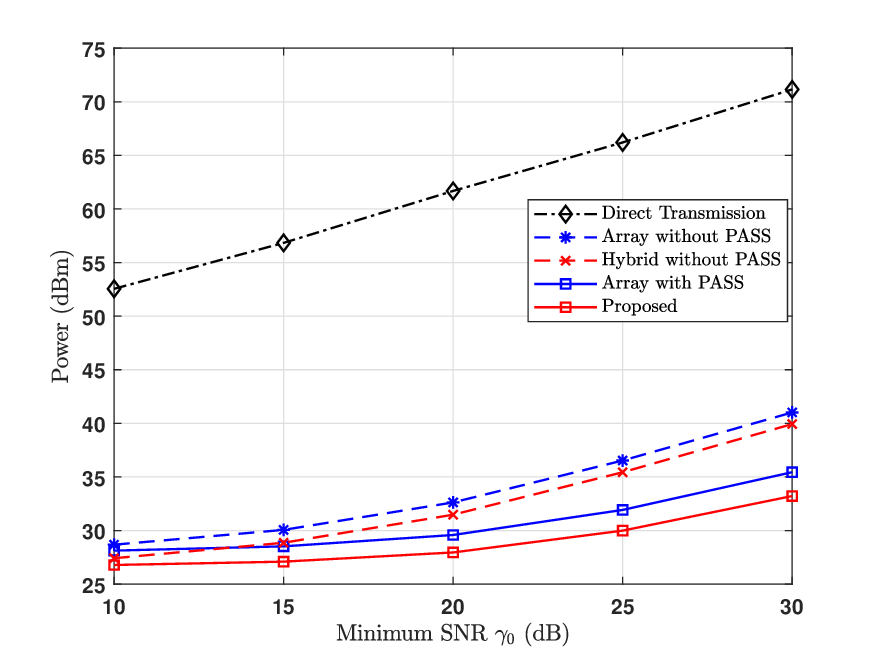}}
\caption{Total power consumption versus $\gamma_0$.}
\label{fig:power_vs_gamma}
\end{figure}

{\color{black}Fig.~\ref{fig:power_vs_gamma} shows the total power consumption (in dBm) versus the target SNR $\gamma_0$. As expected, the total power consumption of all schemes increases with $\gamma_0$. The direct transmission benchmark consumes the highest power because the direct BS--UE link suffers from severe NLoS propagation loss and shadowing at high frequency. Compared with the array-based relay schemes, the proposed scheme achieves lower power consumption by exploiting the high directional gain and reduced residual SI of the horn antenna at the relay. Moreover, the proposed scheme also outperforms the hybrid scheme without PASS, which demonstrates the benefit of optimizing the pinching-antenna position. This confirms that the proposed design jointly benefits from directional wireless feeding and flexible radiation along the waveguide.

Fig.~\ref{fig:power_vs_d1} plots the total power consumption versus the BS--relay distance $d_1$. For the relay-assisted schemes, increasing $d_1$ weakens the BS--relay wireless-feeding link, thereby requiring higher transmit power to satisfy the QoS constraint. For the direct-transmission benchmark, the power consumption also increases with $d_1$ because the BS--UE distance increases accordingly in the considered geometry, which further aggravates the high-frequency NLoS path loss and shadowing. The proposed scheme consistently achieves the lowest power consumption among the relay-assisted schemes, benefiting from both the directional relay-horn reception and the optimized pinching-antenna position. Compared with the schemes without PASS, the performance gain of the proposed scheme confirms that position-adjustable radiation along the waveguide can effectively reduce the relay-to-UE propagation loss. Moreover, the gap between the proposed scheme and the array-based PASS benchmark highlights again the advantage of the relay horn antenna in reducing residual SI and providing directional gain.

Fig.~\ref{fig:power_vs_SI} illustrates the impact of residual SI at the relay. Note that the residual SI level of the array-based relay schemes is set 10 dB higher than that of the horn-based schemes to reflect their weaker passive isolation. It can be observed that the power consumption of all full-duplex relay-assisted schemes increases as the residual SI becomes stronger, which is consistent with Theorem 2, where the residual SI increases the effective relay impairment $\tilde{\sigma}_R^2$. In contrast, the direct transmission benchmark remains unchanged since it does not involve a full-duplex relay and is thus independent of relay SI. Among the relay-assisted schemes, the proposed one is the most robust to SI, owing to the lower residual SI enabled by the relay horn antenna and the additional channel improvement provided by optimized PASS. The sharp increase of the array-based schemes at high SI levels further confirms the importance of relay-side passive isolation in practical full-duplex Wi-PASS.}

\begin{figure}[t]
\centering
{\includegraphics[width=0.43\textwidth]{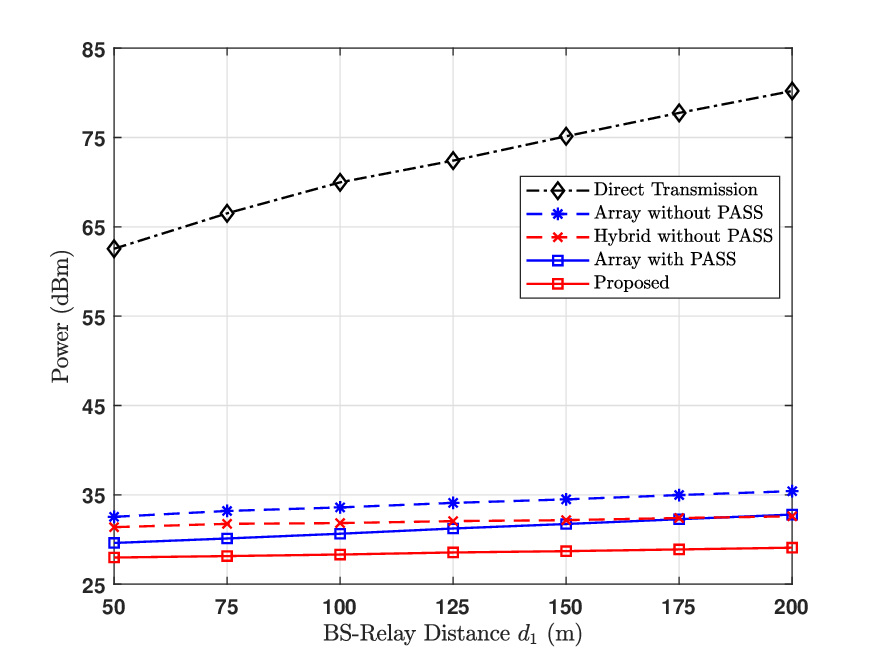}}
\caption{Total power consumption versus $d_1$.}
\label{fig:power_vs_d1}
\end{figure}

\begin{figure}[t]
\centering
{\includegraphics[width=0.43\textwidth]{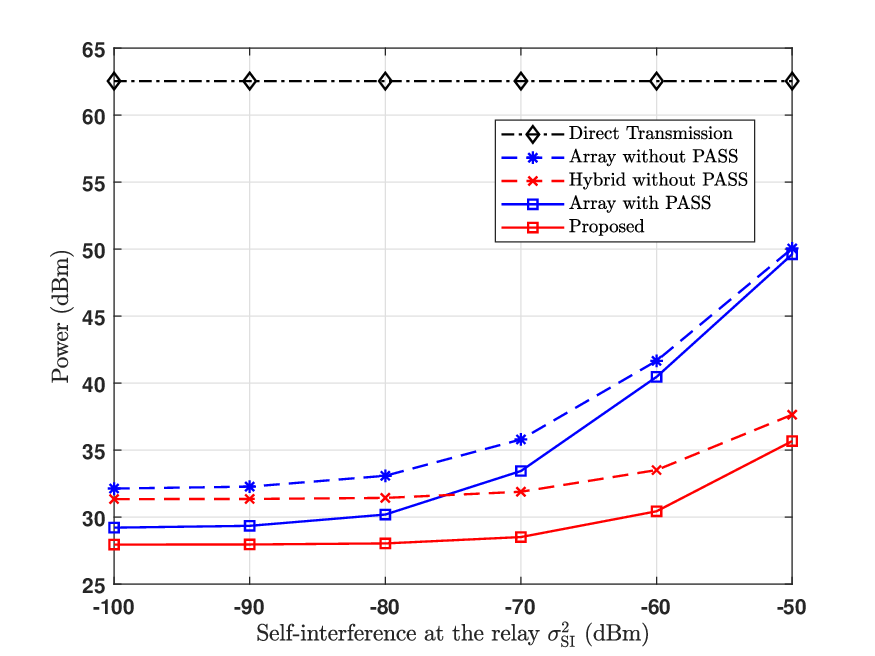}}
\caption{{\color{black}Total power consumption versus residual SI at the relay.}}
\label{fig:power_vs_SI}
\end{figure}

\section{Conclusion} 
\label{Sec:Conclusion}
In this paper, we proposed a hybrid Wi-PASS for high-frequency coverage extension, where a BS antenna array enables adaptive directional transmission and a relay horn antenna supports directional waveguide feeding. Residual SI was incorporated into the end-to-end SNR model, based on which a QoS-constrained total power minimization problem was formulated by jointly optimizing the pinching-antenna position, BS transmit power, and relay gain. Closed-form solutions were derived, revealing the joint impact of waveguide attenuation, free-space propagation loss, and residual SI on power consumption. Numerical results demonstrated that the proposed scheme reduces total power consumption compared with direct transmission, array-based relay schemes, and hybrid transmission without PASS, while improving robustness against residual SI. 
{\color{black}Future work will extend this framework to multi-user scenarios with joint scheduling, resource allocation, and multiple-PA or multiple-waveguide configurations, as well as measurement-based hardware calibration and deployment-complexity/hardware-cost evaluation.}


\bibliographystyle{IEEEtran}
\bibliography{biblio}

@ARTICLE{ding2024,
  author={Ding, Zhiguo and others},
  journal={IEEE Trans. Commun.},  
  title={Flexible-Antenna Systems: A Pinching-Antenna Perspective}, 
  year={2025},
  volume={73},
  number={10},
  pages={9236-9253},
  month={Oct.},
  }

@ARTICLE{Atsushi_22,
  author={Atsushi Fukuda and others},
  journal={Technical Journal}, 
  title={Pinching antenna using a dielectric
waveguide as an antenna}, 
  year={2022},
  volume={23},
  number={3},
  pages={5-12},
  month={Jan.},
  }

@misc{fu2025,
      title={Power Minimization for NOMA-assisted Pinching Antenna Systems With Multiple Waveguides}, 
      author={Yaru Fu and others},
      year={arxiv.org/abs/2503.20336, 2025},
      eprint={2503.20336},
      archivePrefix={arXiv},
      primaryClass={cs.IT},
}

@ARTICLE{Zeng_COMML25,
  author={Zeng, Ming and others},
  journal={IEEE Wirel. Commun. Lett.}, 
  title={Sum Rate Maximization for {NOMA}-Assisted Uplink Pinching-Antenna Systems}, 
  year={2026},
  volume={15},
  number={},
  pages={280-284},
  }

@ARTICLE{Sun_TVT18,
  author={Sun, Shu and others},
  journal={IEEE Trans. Veh. Tech.}, 
  title={Propagation Models and Performance Evaluation for {5G} Millimeter-Wave Bands}, 
  year={2018},
  volume={67},
  number={9},
  pages={8422-8439},
  month={Sep.},}

@ARTICLE{Hao_Network22,
  author={Hao, Wanming and Zhou, Fuhui and Zeng, Ming and Dobre, Octavia A. and Al-Dhahir, Naofal},
  journal={IEEE Network}, 
  title={Ultra Wideband {THz} {IRS} Communications: Applications, Challenges, Key Techniques, and Research Opportunities}, 
  year={2022},
  volume={36},
  number={6},
  pages={214-220},
  }

@ARTICLE{yang2025,
  author={Yang, Zheng and others},
  journal={IEEE Wirel. Commun.}, 
  title={Pinching Antennas: Principles, Applications and Challenges}, 
  year={2025},
  month={Oct.},
  volume={},
  number={},
  pages={1-10},
}

@ARTICLE{liu2025pinching,
  author={Liu, Yuanwei and others},
  journal={IEEE Commun. Mag.}, 
  title={Pinching-Antenna Systems: Architecture Designs, Opportunities, and Outlook}, 
  year={2025},
  month={Jan.},
  volume={},
  number={},
  pages={1-7},
  }

@ARTICLE{tyrovolas2025,
  author={Tyrovolas, Dimitrios and others},
  journal={IEEE Trans. Cogn. Commun. Netw.}, 
  title={Performance Analysis of Pinching-Antenna Systems}, 
  year={2026},
  volume={12},
  number={},
  pages={590-601},
  month={Apr.},}

@ARTICLE{zeng2025EE,
  author={Zeng, Ming and others},
  journal={IEEE Wirel. Commun. Lett.}, 
  title={Energy-Efficient Resource Allocation for {NOMA}-Assisted Uplink Pinching-Antenna Systems}, 
  year={2025},
  volume={14},
  number={11},
  pages={3695-3699},
  month={Nov.},}

@misc{zeng2025_WCM,
      title={Resource Allocation for Pinching-Antenna Systems: State-of-the-Art, Key Techniques and Open Issues}, 
      author={Ming Zeng and others},
      year={2025},
      eprint={2506.06156},
      archivePrefix={arXiv},
      primaryClass={cs.IT},
      url={https://arxiv.org/abs/2506.06156}, 
}

@book{Horn_antenna_book,
  title={Antenna Theory: Analysis and Design},
  author={Constantine A Balanis},
  year={2016},
  publisher={John wiley \& sons}
}

@ARTICLE{Xiao_COMML25,
  author={Xiao, Jian and others},
  journal={IEEE Commun. Lett.}, 
  title={Channel Estimation for Pinching-Antenna Systems {(PASS)}}, 
  year={2025},
  volume={29},
  number={8},
  pages={1789-1793},
  month={Aug.}
 }

@ARTICLE{Zhao_TCOM25,
  author={Zhao, Jingjing and others},
  journal={IEEE Trans. Commun.}, 
  title={Pinching-Antenna Systems-Enabled Multi-User Communications: Transmission Structures and Beamforming Optimization}, 
  year={2025},
  month={Dec.},
  volume={},
  number={},
  pages={1-1},
  }

@ARTICLE{Ouyang_COMML25,
  author={Ouyang, Chongjun and others},
  journal={IEEE Commun. Lett.}, 
  title={Array Gain for Pinching-Antenna Systems ({PASS})}, 
  year={2025},
  volume={29},
  number={6},
  pages={1471-1475},
  month={Jun.},
  }

@ARTICLE{wijewardhana2025,
  author={Wijewardhana, Kasun R. and others},
  journal={IEEE Commun. Mag.}, 
  title={Wireless-Fed Pinching-Antenna Systems ({Wi-PASS}) for {NextG} Wireless Networks}, 
  year={2026},
  volume={},
  number={},
  pages={1-7},
  doi={10.1109/MCOM.001.2500663}}

@article{samimi2014characterization,
  title={Characterization of the 28 {GHz} millimeter-wave dense urban channel for future {5G} mobile cellular},
  author={Samimi, Mathew K and Rappaport, Theodore S},
  journal={NYU Wireless TR},
  volume={1},
  pages={1--322},
  year={2014}
}

\balance

\end{document}